\tolerance=2000

\documentclass[12pt]{amsart}
\usepackage{amssymb, amsmath}
\usepackage{fourier}

\theoremstyle{plain}

\newtheorem{theorem}{Theorem}
\newtheorem*{Definition}{Definition}

\newtheorem{lemma}[theorem]{Lemma}
\newtheorem{claim}[theorem]{Claim}
\newtheorem{corollary}[theorem]{Corollary}

\theoremstyle{remark}
\newtheorem*{Remark}{Remark}
\newtheorem*{Example}{Example}

\newcommand{\be}{\begin{equation}}
\newcommand{\ee}{\end{equation}}
\newcommand{\ba}{\begin{array}}
\newcommand{\ea}{\end{array}}

\newcommand{\mc}[1]{\mathcal{#1}}
\newcommand{\supp}[1]{\mathrm{supp}\left(#1\right)}
\newcommand{\rnk}[1]{\mathrm{rank}\left(#1\right)}
\newcommand{\spn}[1]{\mathrm{span}\left(\left\{ #1 \right\}\right)}
\newcommand{\abs}[1]{\left|#1\right|}

\newcommand{\eps}{\varepsilon}

\newcommand\cB{\mathcal B}
\newcommand\cH{\mathcal H}
\newcommand\cP{\mathcal P}
\newcommand\cS{\mathcal S}
\newcommand\C{\mathbb C}
\newcommand\R{\mathbb R}

\newcommand\tr{\operatorname{Tr}}
\newcommand{\ler}[1]{\left( #1 \right)}

\newcommand\B{\mathcal B}

\begin{document}

\title[]{Maps on quantum states preserving Bregman and Jensen divergences}

\author{D\'aniel Virosztek}
\address{Department of Analysis, Institute of Mathematics\\
Budapest University of Technology and Economics\\
H-1521 Budapest, Hungary
and
MTA-DE ``Lend\" ulet'' Functional Analysis Research Group, Institute of Mathematics\\
         University of Debrecen\\
         H-4002 Debrecen, P.O. Box 400, Hungary}
\email{virosz@math.bme.hu}
\urladdr{http://www.math.bme.hu/\~{}virosz}

\thanks{
The author was supported by the ``Lend\" ulet'' Program (LP2012-46/2012) of the Hungarian Academy of Sciences and by the Hungarian Scientific Research Fund (OTKA) Reg. No.  K104206. The author was also supported by the ``For the Young Talents of the Nation'' scholarship program (NTP-EF\"O-P-15-0481) of the Hungarian State.}

\dedicatory{Dedicated to Anna Gelniczky}

\keywords{Density operators, Bregman divergences, Jensen divergences, Preserver transformations}
\subjclass[2010]{Primary: 47B49, 46L30.}

\begin{abstract}
We describe the structure of the bijective transformations on the set of density operators which preserve the Bregman $f$-divergence for an arbitrary differentiable strictly convex function $f.$ Furthermore, we determine the preservers of the Jensen $f$-divergence in the case when the generating function $f$ belongs to a recently introduced function class called Matrix Entropy Class.
\end{abstract}
\maketitle
\section{Introduction}
\subsection{Motivations and overview of the literature}
The investigation of certain measures of dissimilarity between two objects (numbers, vectors, matrices, functions and so on) plays an essential role in several areas of mathematics ans mathematical physics. Some of the widely investigated measures are distance functions, but there are many important measures which do not satisfy the properties of distance.
\par

For instance, the \emph{square loss function} has been used widely for regression analysis, \emph{Kullback-Leibler divergence} \cite{kl} has been applied to compare two probability density functions, the \emph{Itakura-Saito divergence} \cite{is} is used as a measure of the perceptual difference between spectra. The \emph{Bregman divergence} was introduced by Lev Bregman \cite{breg} for convex functions $\phi :\R^d\to\R$ as the $\phi$-depending nonnegative measure of discrepancy of elements of $\R^d$. Originally his motivation was the problem of convex programming, but it became widely researched both from theoretical and practical viewpoints. The remarkable fact that all the aforementioned divergences --- let alone \emph{Stein's loss} and \emph{Umegaki's relative entropy} --- are special cases of the Bregman divergence shows its importance \cite{ba05}.

\par

Moreover, in the recent volume \cite{nb} on matrix information geometry 3 chapters are devoted to the study of Bregman divergences. One feature of \emph{Jensen divergences} which justifies their importance is that Bregman divergences can be considered as asymptotic Jensen divergences (see Section 6.2 in \cite{nb}). Furthermore, the celebrated \emph{Jensen-Shannon divergence} and its non-commutative counterpart the \emph{Jensen-von Neumann divergence} are particular Jensen divergences.
\par
For some recent results conserning Bregman divergences of operators we refer to \cite{ls14} and \cite{pv15}. In \cite{ls14}, M. Lewin and J. Sabin characterized a certain monotonicity property of the Bregman divergence by the operator monotonicity of the derivative of the corresponding scalar function. In \cite{pv15}, J. Pitrik and the author gave a characterization of the joint convexity of the Bregman divergence in terms of the generating function.
\par
The Bregman and Jensen  divergences are generalized distance measures.
This latter notion stands for any function $d:\mathcal X\times \mathcal X\to [0,\infty )$ on any set $\mathcal X$ with the mere property that for $x,y\in \mathcal X$ we have $d(x,y)=0$ if and only if $x=y$. Transformations which preserve generalized distance measures are called generalized isometries.
\par

In a recent paper L. Moln\'ar, J. Pitrik and the author determined the structure of the generalized isometries of the cone of positive definite matrices with respect to Bregman divergences and Jensen divergences \cite{mpv15}.
Let us mention that in the papers \cite{ml13j,ml15b} L. Moln\'ar and his coauthors considered a certain family of generalized distance measures on the cone of positive definite matrices and the structure of the isometries was obtained by Mazur-Ulam type arguments. The aforementioned family of divergences is almost disjoint from the families of Bregman divergences and Jensen divergences. The intersection contains only the Stein's loss and the Chebbi-Moakher log-determinant $\alpha$-divergences, respectively.

\subsection{Goals}

In this paper we describe the structure of the generalized isometries of the set of density matrices --- which used to represent the state space of a finite quantum system --- with respect to Bregman and Jensen divergences. It turns out that every bijective transformation which leaves the Bregman or Jensen divergence invariant is implemented by a unitary or antiunitary operator on the underlying Hilbert space. Such a result may be considered as a Wigner type result. For other closely related Wigner type results we refer to \cite{lm08} and \cite{mnsz13}. In fact, at several points of our argument we use ideas and techniques of the latter two papers.

\subsection{Basic notions and notations}

Throughout this paper the following notations will be used. $\mc{H}$ stands for a finite dimensional complex Hilbert space. $\R^+$ ($\R^{++}$) consists of all nonnegative (positive) numbers and $\mc{B}(\mc{H})$ ($\mc{B}^{sa}(\mc{H}),$ $\mc{B}^{+}(\mc{H}),$ $\mc{B}^{++}(\mc{H})$) denotes the set of bounded (self-adjoint, positive semidefinite, positive definite) linear operators on the Hilbert space $\mc{H}.$
$\cS\ler{\cH}$ stands for the state space of $\cH$ (the set of positive semidefinite operators with unit trace) and $\cP_1(\cH)$ denotes the set of rank-one projections on $\cH.$


If $f: \, I \rightarrow \R$ is a function defined on an interval $I \subset \R$ then the corresponding \emph{standard operator function} is the following map:
$$
f: \{A \in \mc{B}^{sa}(\mc{H}): \ \sigma(A) \subseteq I \} \rightarrow \mc{B}(\mc{H})
$$
$$
A=\sum_{a \in \sigma(A)} a P_a \mapsto f(A):=\sum_{a \in \sigma(A)} f(a) P_a,
$$
where $\sigma(A)$ is the spectrum and $P_a$ is the spectral projection corresponding to the eigenvalue $a.$

\subsection{Bregman divergences on positive definite operators}
Let $f$ be a differentiable strictly convex function on $(0,\infty)$. (Note that the derivative of $f$ is necessarily contiuous.) The Bregman $f$-divergence of the positive definite operators $A, B \in \B^{++}(\cH)$ is defined by
$$
H_f(A,B)= \tr \ler{f(A)-f(B)-f'(B)(A-B)},
$$
see e. g. formula $(5)$ in \cite{pv15}. Easy computation shows that if the spectral decompositions are
$$
A=\sum_{a \in \sigma(A)} a P_a \text{ and } B=\sum_{b \in \sigma(B)} b Q_b.
$$
then we have
$$
H_f(A,B)=\sum_{a \in \sigma(A), b \in \sigma(B)} \ler{f(a)-f(b)-f'(b)(a-b)} \tr P_a Q_b.
$$

\subsubsection{The extension to positive semidefinite operators}
If $f$ can be extended to $0$ by continuity, then the Bregman $f$-divergence can be extended to positive semidefinite operators by continuity the following way. Let
\be \label{limesz}
H_f\ler{X, Y}:= \lim_{\eps \to 0} \ler{X+\eps I, Y + \eps I}
\ee
for positive semidefinite operators $X$ and $Y.$ In the followings we show that the limit (\ref{limesz}) always exists and takes values in $\R^+ \cup \{+\infty\}.$ The argument is based on \cite[Sec. 2.1]{pv15}.
\par
If $X$ and $Y$ admit the spectral decompositions $X=\sum_{x \in \sigma(X)} x P_x$ and $Y=\sum_{y \in \sigma(Y)} y Q_y$ then
$$
H_f(X+\eps I,Y+\eps I)
$$
\be \label{kiir}
=\sum_{x \in \sigma(X), y \in \sigma(Y)} \ler{f(x+\eps)-f(y+\eps) -f'(y+\eps)(x-y)} \tr P_x Q_y.
\ee
Assume that $f \in C^1((0,\infty)) \cap C^0([0,\infty)),$ that is, $\lim_{x \to 0} f(x) \in \R.$
The convexity of $f$ gives that $f'$ is monotone increasing, hence $\lim_{\eps \to 0} f'(\eps) \in \R$ or $\lim_{\eps \to 0} f'(\eps)=-\infty.$
\par

Clearly, if $\lim_{\eps \to 0} f'(\eps) \in \R,$ then the limit of (\ref{kiir}) is a real number. If $\supp{X} \subseteq \supp{Y},$ then for every $x \in \sigma(X)$ we have $\tr P_x Q_0 = 0$ or $x=0,$ hence the limit of (\ref{kiir}) is finite in this case, as well. It is easy to see that if $\supp{X} \nsubseteq \supp{Y}$ and $\lim_{\eps \to 0} f'(\eps)=-\infty,$ then the limit of (\ref{kiir}) is $+\infty.$

\subsubsection{Computation rules}
By the above argument, if $\lim_{x \to 0} f'(x)= - \infty,$ for positive semidefinite operators $X$ and $Y$ the following computation rule holds.
\be \label{comp}
H_f(X,Y)=\sum_{x \in \sigma(X), y \in \sigma(Y)\setminus\{0\}} \ler{f(x)-f(y) -f'(y)(x-y)} \tr P_x Q_y 
\ee
\be \label{compvar}
=\tr_{\supp{Y}} \ler{f(X)-f(Y)-f'(Y)(X-Y)}
\ee
if $\supp{X} \subseteq \supp{Y}$
and
$$
H_f(X,Y)=\infty
$$
if $\supp{X} \nsubseteq \supp{Y}.$
(For any $\mathcal{K} \subset \cH,$ $\tr_\mathcal{K}$ means that we take the trace only on the subspace $\mathcal{K}.$)

\par

If $\lim_{x \to 0} f'(x) \in \R$ then the computation rule is simply the following.
\be \label{comp_2}
H_f(X,Y)=\sum_{x \in \sigma(X), y \in \sigma(Y)} \ler{f(x)-f(y) -f'(y)(x-y)} \tr P_x Q_y
\ee
\be \label{comp_2var}
=\tr \ler{f(X)-f(Y)-f'(Y)(X-Y)}.
\ee

\begin{Example}
For the standard entropy function $f(x)=x \log{x}$ the induced Bregman $f$-divergence on density matrices is the \emph{Umegaki relative entopy}
$$
H_f(A,B)=\tr A \ler{\log{A}-\log{B}}
$$
which is one of the most important numerical quantities in quantum information theory. Therefore, Bregman $f$-divergences may be considered as genralized relative entropies \cite{ls14}.
\par
For any $q>1,$ the function $f_q: x\mapsto f_q(x):=\frac{x^q-x}{q-1}$ is convex, and the induced Bregman divergence is
$$
H_{f_q}\ler{A,B}=\tr B^q+\frac{1}{q-1}\ler{\tr A^q-q\tr AB^{q-1}},
$$
see \cite{pv15}.
In the particular case $q=2$ the latter quantity is just the square of the Hilbert-Schmidt norm,
$$
H_{f_2}\ler{A,B}=\tr \ler{A-B}^2.
$$
\end{Example}

\subsection{Jensen divergences on positive semidefinite operators} \label{jendef}
Let $f$ be a strictly convex function on $(0,\infty)$ such that the limit $\lim_{x \to 0+} f(x)=:f(0)$ exists. The Jensen $f$-divergence of the positive semidefinite operators $A \in \cB^{+}(\cH)$ and $B \in \cB^{+}(\cH)$ is defined by
$$
J_{f}(A,B)=\tr \ler{\frac{1}{2} \ler{f(A)+f(B)}-f\ler{\frac{1}{2} \ler{A + B}}},
$$
see e. g. \cite{mpv15}.
We investigate the preservers of the Jensen $f$-divergence in the case when the generating function $f$ belongs to the \emph{Matrix Entropy Class}.

In the recent paper \cite{chtr} \emph{Tropp} and \emph{Chen} defined the Matrix Entropy Class the following way.

\begin{Definition} 
The Matrix Entropy Class consists of the real valued functions defined on $[0,\infty)$ that are either affine or satisfy the following conditions.
\begin{itemize}
 \item $f$ is convex and $f \in C([0, \infty)) \cap C^2((0, \infty)).$
 \item For every finite dimensional Hilbert space $\cH$ the map
 $$ \cB(\cH)^{++} \rightarrow \cB\ler{\cB(\cH)^{sa}}; \, \, X \mapsto \ler{\mathbf{D} f'[X]}^{-1}$$
 is concave with respect to the semidefinite order,
 where $\mathbf{D} f[Y]$ denotes the Fr\'echet derivative of the standard operator function $f: \cB(\cH)^{++} \rightarrow \cB(\cH)^{sa}$ at the point $Y.$
 
\end{itemize}
\end{Definition}

\begin{Example}
The standard entropy function $f(x)=x \log{x}$ is an important element of the Matrix Entropy Class \cite{chtr}. The induced Jensen $f$-divergence is the well-known \emph{Jensen-von Neumann divergence}
$$
J_f(A,B)=\frac{1}{2}\ler{\tr A \log{A}+\tr B \log{B}}-\tr\ler{\frac{A+B}{2}}\log{\ler{\frac{A+B}{2}}}.
$$

If $1<q\leq2,$ then the function $f_q: x\mapsto f_q(x):=\frac{x^q-x}{q-1}$ belongs to the Matrix Entrpy Class \cite{chtr}, and the induced Jensen divergence is
$$
J_{f_q}\ler{A,B}=\frac{1}{q-1}\ler{\frac{\tr A^q+\tr B^q}{2}-\tr \ler{\frac{A+B}{2}}^q}.
$$
In particular, if $q=2,$ then we have
$$
J_{f_2}\ler{A,B}=\tr \ler{\frac{A-B}{2}}^2.
$$
\end{Example}
\section{The main results}
It is clear that any unitary or antiunitary conjugation leaves the Bregman divergences and Jensen divergences invariant.
The main result of this paper is that the converse statement is also true, i. e., the preservers of Bregman and Jensen divergences are necessarily unitary or antiunitary conjugations.

\begin{theorem} \label{fo_breg}
Let $f \in C^1((0, \infty)) \cap C^0([0,\infty))$ be a strictly convex function.
Let $\phi: \cS(\cH) \rightarrow \cS(\cH)$ be a bijection which preserves the Bregman $f$-divergence, that is,
$$
H_f(\phi(A),\phi(B))=H_f(A,B) \qquad \ler{A, B \in \cS(\cH)}.
$$
Then there exists a unitary or antiunitary transformation $U: \cH \rightarrow \cH$ such that
$$
\phi(A)=UAU^* \qquad \ler{A \in \cS(\cH)}.
$$
\end{theorem}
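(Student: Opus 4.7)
The plan is a Wigner-type reduction: first show that $\phi$ permutes the set $\cP_1(\cH)$ of rank-one projections, then show $\phi|_{\cP_1(\cH)}$ preserves the transition probabilities $\tr PQ$; Wigner's theorem then yields a unitary or antiunitary $U$ implementing $\phi$ on $\cP_1(\cH)$, and one finally promotes this to all of $\cS(\cH)$. I would split the analysis on the value $\ell := \lim_{x \to 0^+} f'(x)$, since the finiteness of $H_f$ depends on it. When $\ell = -\infty$, formula (\ref{comp}) gives $H_f(B, A) < +\infty$ iff $\supp{B} \subseteq \supp{A}$, and pure states admit the clean characterization
$$A \in \cP_1(\cH) \iff \{B \in \cS(\cH) : H_f(B, A) < +\infty\} = \{A\},$$
preserved by $\phi$. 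When $\ell \in \R$, I would use the formula
$$H_f(A, P) = \tr f(A) - [f(1) + (n-1) f(0)] + (f'(1) - f'(0))(1 - \tr PA),$$
valid for $P \in \cP_1(\cH)$ and $A \in \cS(\cH)$ with $n = \dim \cH$, together with the inequality $\tr f(A) \leq f(1) + (n-1) f(0)$ (from convexity of $f$, tight exactly on $\cP_1(\cH)$). This leads to the $H_f$-intrinsic characterization $\sup_{B \in \cS(\cH)} H_f(B, A) = f'(1) - f'(0)$ precisely when $A \in \cP_1(\cH)$. Either way, $\phi(\cP_1(\cH)) = \cP_1(\cH)$.

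For the transition probabilities, when $\ell \in \R$ specialising the formula to $A = Q \in \cP_1(\cH)$ yields $H_f(P, Q) = (1 - \tr PQ)(f'(1) - f'(0))$, so preservation of $H_f$ gives $\tr \phi(P)\phi(Q) = \tr PQ$ at once. When $\ell = -\infty$, $H_f$ is infinite between distinct pure states, so I would introduce auxiliary density matrices: first identify $I/n$ as the unique full-rank $B$ for which $P \mapsto H_f(P, B)$ is constant on $\cP_1(\cH)$, which forces $\phi(I/n) = I/n$; then use the family $B_\mu(Q) := (1-\mu) I/n + \mu Q$ for $Q \in \cP_1(\cH)$ and $\mu \in (0, 1)$, whose spectrum is $\{(1-\mu)/n + \mu,\, (1-\mu)/n\}$ with multiplicities $(1, n-1)$. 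These are characterised intrinsically by: full rank plus the existence of an $R \in \cP_1(\cH)$ for which $P \mapsto H_f(P, B)$ is affine in $\tr PR$ with strictly negative slope and unique minimum at $P = R$. This lets one track $\phi(B_\mu(Q))$ and, since both $\{\tr PR : P \in \cP_1(\cH)\}$ and $\{\tr \phi(P) \phi(R) : P \in \cP_1(\cH)\}$ equal $[0, 1]$, an affine change-of-variables argument on $[0, 1]$ forces $\tr \phi(P) \phi(Q) = \tr PQ$.

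Wigner's theorem then furnishes a unitary or antiunitary $U: \cH \to \cH$ with $\phi(P) = UPU^*$ for all $P \in \cP_1(\cH)$. Replacing $\phi$ by $A \mapsto U^* \phi(A) U$ reduces to a bijection of $\cS(\cH)$ preserving $H_f$ and fixing every pure state. The identity $H_f(P, \phi(A)) = H_f(P, A)$ for all $P \in \cP_1(\cH)$ combined with the affine expression $H_f(P, A) = C(A) - \tr P f'(A)$ forces $f'(A) - f'(\phi(A)) = c\,I$ for some scalar $c$; the trace constraint $\tr A = 1 = \tr \phi(A)$ and strict monotonicity of $f'$ then pin $c$ to zero, yielding $\phi(A) = A$ (in Case $\ell \in \R$ directly; in Case $\ell = -\infty$ after first noting $\supp{\phi(A)} = \supp{A}$ and restricting the argument to this common support). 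The main obstacle will be the second case of the transition-probability step: verifying the intrinsic $H_f$-characterizations of the calibration states $I/n$ and $B_\mu(Q)$ and confirming that $\phi$ maps the latter to states of the same form, so that the affine-change-of-variables argument goes through.
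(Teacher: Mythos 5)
Your overall architecture --- identify $\cP_1(\cH)$, show preservation of transition probabilities, invoke Wigner's theorem, then derive $f'(A)-f'(\psi(A))=cI$ and force $c=0$ from the trace constraint --- is exactly the paper's, and both your concluding step and your formula $H_f(P,Q)=(1-\tr PQ)(f'(1)-f'(0))$ in the finite-derivative case agree with it. But two intermediate steps do not hold up. The more serious one is your identification of pure states when $\ell:=\lim_{x\to 0^+}f'(x)\in\R$: you take the supremum over the \emph{first} argument and claim $\sup_B H_f(B,A)=f'(1)-f'(0)$ iff $A\in\cP_1(\cH)$. With the normalization $f(0)=f(1)=0$ one computes $\sup_B H_f(B,A)=\sum_i h(a_i)-f'(a_{\min})$, where $h(a)=af'(a)-f(a)$ and $a_{\min}$ is the smallest eigenvalue of $A$; for pure $A$ this is $h(1)-f'(0)=f'(1)-f'(0)$, but nothing forces it to be smaller for mixed states. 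Taking $f''=1$ on $(0,1/2)$ and $f''=\delta<1/3$ on $(1/2,\infty)$, and $A=\mathrm{diag}(1/2,1/2,0)$ in dimension $3$, gives $2h(1/2)>h(1)$, so the supremum \emph{exceeds} $f'(1)-f'(0)$; by continuity and connectedness of the set of mixed states, some mixed state then attains the value $f'(1)-f'(0)$ exactly, and your level-set characterization fails. The paper avoids this by supping over the \emph{second} argument, $M(X)=\max_D H_f(X,D)$, and using strict convexity of $X\mapsto H_f(X,D)$ together with unitary invariance to show $M$ is maximal precisely on $\cP_1(\cH)$; you should switch slots.

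The second problem is the circularity you yourself flag in the case $\ell=-\infty$: your characterization of the calibration states $B_\mu(Q)=(1-\mu)I/n+\mu Q$ --- ``$P\mapsto H_f(P,B)$ is affine in $\tr PR$ with negative slope and unique minimum at $R$'' --- is phrased in terms of $\tr PR$, the very quantity whose $\phi$-invariance you are trying to establish, so it is not an $H_f$-intrinsic property and you cannot conclude that $\phi\ler{B_\mu(Q)}$ has the same form. From $H_f(P,B)=c_B-\tr f'(B)P$ you can intrinsically read off the range of this function and the uniqueness of its minimizer (hence that the top eigenvalue of $\phi(B)$ is simple), but not that $\phi(B)$ has only two distinct eigenvalues. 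The paper closes exactly this gap differently: since support inclusions are detected by finiteness of $H_f$ when $\ell=-\infty$, $\phi$ preserves rank; on a rank-two state $S=\lambda P+\mu Q$ the quantity $\max_R H_f(R,S)-\min_R H_f(R,S)=f'(\mu)-f'(\lambda)$ pins down the spectrum, whence $\phi(S)=\lambda\phi(P)+\mu\phi(Q)$ with $\phi(P)\perp\phi(Q)$, and the unique-convex-combination argument yields $\tr RP=\tr\phi(R)\phi(P)$. Replacing your full-rank calibration family with these rank-two states repairs the step. Your pure-state characterization for $\ell=-\infty$ (namely $\{B: H_f(B,A)<\infty\}=\{A\}$) and your final $f'(A)-f'(\psi(A))=cI$ argument are correct.
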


\begin{theorem} \label{fo_jen}
Let $f$ be a strictly convex element of the \emph{Matrix Entropy Class}.
Let $\phi: \cS(\cH) \rightarrow \cS(\cH)$ be a bijection which preserves the Jensen $f$-divergence, that is,
$$
J_f(\phi(A),\phi(B))=J_f(A,B) \qquad  \ler{A, B \in \cS(\cH)}.
$$
Then there exists a unitary or antiunitary transformation $U: \cH \rightarrow \cH$ such that
$$
\phi(A)=UAU^* \qquad  \ler{A \in \cS(\cH)}.
$$
\end{theorem}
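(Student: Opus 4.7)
My plan is to follow the three-stage Wigner-type preserver template: (i) show $\phi$ preserves the set $\cP_1(\cH)$ of rank-one projections; (ii) invoke Wigner's theorem on pure states; (iii) extend the resulting symmetry to all of $\cS(\cH)$.

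For stage (i), I would begin with the spectral computation: for $P, Q \in \cP_1(\cH)$ with $t := \tr PQ$, the nonzero eigenvalues of $(P+Q)/2$ are $(1 \pm \sqrt{t})/2$, yielding
\[
J_f(P,Q) = f(1) + f(0) - f\!\left(\tfrac{1+\sqrt{t}}{2}\right) - f\!\left(\tfrac{1-\sqrt{t}}{2}\right),
\]
which is strictly decreasing in $t \in [0,1]$ by strict convexity of $f$. More generally, for $A, B \in \cS(\cH)$ with orthogonal supports one obtains
\[
J_f(A,B) = \sum_i h(a_i) + \sum_j h(b_j) + \tfrac{r_A + r_B}{2}\,f(0), \qquad h(x) := \tfrac{1}{2}f(x) - f(x/2),
\]
where the $a_i, b_j$ are the nonzero eigenvalues of $A$ and $B$. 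Setting $M^\ast := f(1) + f(0) - 2f(1/2)$, my approach is to show that $M^\ast = \sup_{A,B \in \cS(\cH)} J_f(A,B)$ and to analyse the $\phi$-invariant set $\{(A,B) : J_f(A,B) = M^\ast\}$. When $h$ is strictly convex (as for $f_q$ with $1 < q \le 2$), this set consists precisely of pairs of orthogonal rank-one projections, and $\phi(\cP_1(\cH)) = \cP_1(\cH)$ follows at once. In the borderline case $f(x) = x\log x$, where $h$ is affine, the same set consists of all pairs of orthogonally supported states, and $\phi$-preservation of the full rank stratification (and hence of $\cP_1(\cH)$) is recovered through a dimension argument on the ``orthogonal complement'' $\{B : AB = 0\}$.

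Stages (ii) and (iii) are comparatively routine. In (ii), inverting the monotone formula for $J_f(P,Q)$ recovers the transition probability $\tr PQ = \tr\phi(P)\phi(Q)$, and Wigner's theorem produces a unitary or antiunitary $U$ with $\phi(P) = UPU^\ast$ for every $P \in \cP_1(\cH)$. In (iii), after replacing $\phi$ by $A \mapsto U^\ast\phi(A)U$, I may assume $\phi$ fixes $\cP_1(\cH)$ pointwise; the identity $J_f(A,P) = J_f(\phi(A),P)$ for all $P \in \cP_1(\cH)$ then rewrites as
\[
\tr f\!\left(\tfrac{A+P}{2}\right) - \tfrac{1}{2}\tr f(A) = \tr f\!\left(\tfrac{\phi(A)+P}{2}\right) - \tfrac{1}{2}\tr f(\phi(A)),
\]
and since varying $P \in \cP_1(\cH)$ amounts to perturbing $A/2$ by arbitrary rank-one positive operators, eigenvalue interlacing (Weinstein--Aronszajn) together with the real-analytic dependence of the trace functional on the perturbation strength encodes the spectrum and spectral projections of $A$ uniquely, forcing $\phi(A) = A$.

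The principal obstacle is stage (i), and more precisely the need to unify the two regimes distinguished by whether $h$ is strictly convex or affine. The Matrix Entropy Class hypothesis should be decisive in ensuring that the $J_f$-structure is rich enough (through joint convexity features of $J_f$ coming from the concavity of $X \mapsto (\mathbf{D} f'[X])^{-1}$) to detect rank-one projections in both regimes.
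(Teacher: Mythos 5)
Your template (preserve $\cP_1(\cH)$, apply Wigner, then rigidify on all of $\cS(\cH)$) is the same as the paper's, and stage (ii) is carried out identically. But stage (i) contains a genuine gap that you yourself flag as the ``principal obstacle'': you never establish that $\sup_{A,B\in\cS(\cH)}J_f(A,B)=M^\ast$, nor that maximizers must have orthogonal supports. Your formula for $J_f(A,B)$ only covers pairs that are \emph{already} orthogonally supported, and your case analysis on whether $h(x)=\tfrac12 f(x)-f(x/2)$ is strictly convex or affine covers $f_q$ and $x\log x$ but not a general strictly convex member of the Matrix Entropy Class (for which $h$ need be neither). The paper's resolution is concrete and is exactly where the Matrix Entropy Class hypothesis enters: by Pitrik--Virosztek, membership in the class implies \emph{joint convexity of the Bregman divergence} $H_f$, and the identity
$$
J_f(A,B)=\tfrac{1}{2}\ler{H_f\ler{A,\tfrac{A+B}{2}}+H_f\ler{B,\tfrac{A+B}{2}}}
$$
transfers this to $J_f$, giving $J_f(A,B)\le\sum_{i,j}\lambda_i\mu_j J_f(P_i,Q_j)\le M_f$ with equality forcing $P_iQ_j=0$ for all $i,j$, hence $AB=0$. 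The rank-one characterization then follows from a counting argument: a set of $\dim\cH$ states that are pairwise at divergence $M_f$ must consist of pairwise orthogonal, hence rank-one, projections. Without this (or an equivalent) input, your stage (i) does not go through for the stated class of functions.

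Stage (iii) as you describe it would also fail as written: $f$ is only assumed $C^2$ on $(0,\infty)$ and continuous at $0$, so there is no ``real-analytic dependence'' of $t\mapsto\tr f\ler{\tfrac{A}{2}+\tfrac{t}{2}P}$ to exploit. The paper replaces this with an elementary variational argument: using Weyl's inequality and the strict monotonicity of the difference quotient $h(a,b)=\frac{f(a)-f(b)}{a-b}$, it shows $\tr f\ler{\tfrac12(R+D)}$ is maximized over $R\in\cP_1(\cH)$ exactly when $R$ lies under the top eigenprojection of $D$; iterating on orthogonal complements identifies all eigenprojections of $D$ and $\psi(D)$, and a separate monotonicity argument with $g(a)=f\ler{\tfrac{a+1}{2}}-f\ler{\tfrac{a}{2}}$ pins down the eigenvalues. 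Your perturbation-theoretic intuition is in the right spirit, but the actual mechanism needed is this discrete maximization, not analytic continuation in the perturbation parameter.
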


\section{Proofs}

\begin{Remark}
Affine perturbation of the generating function does not change the Bregman or the Jensen divergence, that is
$$
H_{f+a}(.,.)=H_f(.,.)
$$
and
$$
J_{f+a}(.,.)=J_f(., .)
$$
for any convex function $f$ and affine function $a(x)=\alpha x + \beta$. Therefore in the followings we may and do assume that $f(0)=f(1)=0.$
\end{Remark}

\subsection{The proof of Theorem \ref{fo_breg}}

\begin{proof}[Case I]
First we investigate the case when $\lim_{x \to 0} f'(x)= - \infty.$
\par
In this first part of the proof we basicly follow the argument of \cite{lm08}, but the more general statement requires new techniques at some crucial points of the proof.
\par
As $f'(x)$ is unbounded from below, the divergence $H_f(A,B)$ is finite if and only if $\supp{A} \subseteq \supp{B}.$ Therefore, any divergence-preserving transformation $\phi$ has the following properties.
\be \label{tart1}
\supp{A} \subseteq \supp{B} \Leftrightarrow \supp{\phi(A)} \subseteq \supp{\phi(B)},
\ee
\be \label{tart2}
\supp{A} = \supp{B} \Leftrightarrow \supp{\phi(A)} = \supp{\phi(B)}
\ee
and
\be \label{tart3}
\supp{A} \subsetneq \supp{B} \Leftrightarrow \supp{\phi(A)} \subsetneq \supp{\phi(B)}.
\ee
As a consequence, $\phi$ preserves the rank --- the reader should consult \cite{lm08} for a more detailed argument. In particular, the image of a rank-one projection is a rank-one projection, as well. So $\phi$ restricted to $\cP_1(\cH)$ is a bijection from $\cP_1(\cH)$ to $\cP_1(\cH)$.
\par
Let $P$ and $Q$ be orthogonal elements of $\cP_1(\cH),$ set $0<\lambda<\mu<1$ such that $\lambda+\mu=1,$ $S:=\lambda P + \mu Q$ and let $R \in \cP_1(\cH).$
\par
If $\supp{R}\subseteq \supp{S}$ then by the computation rule (\ref{comp}) we have
$$
H_f(R,S)
$$
$$
=\ler{f(1)-f(\lambda)-f'(\lambda)(1-\lambda)}\tr R P+ \ler{f(0)-f(\lambda)-f'(\lambda)(0-\lambda)}\tr(I-R) P
$$
$$
+\ler{f(1)-f(\mu)-f'(\mu)(1-\mu)}\tr R Q+ \ler{f(0)-f(\mu)-f'(\mu)(0-\mu)}\tr(I-R)Q
$$
$$
=-f'(\lambda) \tr RP -f'(\mu) \tr RQ +\lambda f'(\lambda)-f(\lambda)+\mu f'(\mu)-f(\mu)
$$
\be \label{szamol}
=-f'(\lambda) \tr RP -f'(\mu) \tr RQ +C_{f, \lambda, \mu}
\ee
if we introduce the notation $C_{f, \lambda, \mu}=\lambda f'(\lambda)-f(\lambda)+\mu f'(\mu)-f(\mu).$

\par
As $R$ runs through the set of rank-one projections which have their support contained in $\supp{S},$ $\tr RP$ and $\tr RQ$ take all values such that $0 \leq \tr RP, \tr RQ \leq 1$ and $\tr RP+ \tr RQ=1.$ $f$ is strictly convex, hence $f'$ is strictly monotone increasing. By the strict monotonicity of $f'$
$$
\mathrm{max}_{\{R \in \cP_1(\cH): \supp{R}\subseteq \supp{S}\}} H_f(R,S)=-f'(\lambda)+C_{f, \lambda, \mu}
$$
which maximum is taken only at $R=P$ and

$$
\mathrm{min}_{\{R \in \cP_1(\cH): \supp{R}\subseteq \supp{S}\}} H_f(R,S)=-f'(\mu)+C_{f, \lambda, \mu}
$$
which minimum is taken only at $R=Q.$
Clearly,
$$
\mathrm{max}_{\{R \in \cP_1(\cH): \supp{R}\subseteq \supp{S}\}} H_f(R,S)
$$
$$
-\mathrm{min}_{\{R \in \cP_1(\cH): \supp{R}\subseteq \supp{S}\}} H_f(R,S)
$$
$$
=f'(\mu)-f'(\lambda).
$$
$f'$ is strictly monotone increasing, so $f'(\mu)-f'(\lambda)=f'(1-\lambda)-f'(\lambda)$ is a strictly monotone decreasing function of $\lambda.$ Therefore, $f'(\mu)-f'(\lambda)$ uniquely determines $\lambda$ and hence the spectrum of the rank-two density $S.$
\par
This means that the Bregman $f$-divergence preserving property of $\phi$ implies that $\phi(S)$ is a rank-two density with eigenvalues $\lambda$ and $\mu$ (and possibly zero). Hence $\phi(S)=\lambda P'+\mu Q'$ with some rank-one projections $P'$ and $Q'$ which are orthogonal to each other.
\par
Therefore, we can conclude that
$$
R=P \Leftrightarrow H_f(R,S)=-f'(\lambda)+C_{f,\lambda,\mu} \Leftrightarrow H_f\ler{\phi(R),\phi(S)}=-f'(\lambda)+C_{f,\lambda,\mu} \Leftrightarrow
$$
$$
\Leftrightarrow H_f \ler{\phi(R), \lambda P'+ \mu Q'}= -f'(\lambda)+C_{f,\lambda,\mu} \Leftrightarrow \phi(R)=P'.
$$
We deduced that $\phi(P)=P'.$ Similarly, $\phi(Q)=Q'.$ So 
\be \label{qlin}
\phi(S)=\phi\ler{\lambda P+ \mu Q}=\lambda \phi(P) + \mu \phi(Q),
\ee
and the mutual orthogonality of rank-one projections is preserved.

\par

Let $P$ and $Q$ be arbitrary mutually orthogonal elements of $\cP_1(\cH)$ and set $R \in \cP_1(\cH)$ such that $\supp{R} \subseteq \supp{P}+\supp{Q}.$ Let $0<\lambda<\mu<1$ with $\lambda + \mu=1.$ Then by (\ref{szamol}), (\ref{qlin}) and by the preserver property of $\phi$
$$
H_f(R, \lambda P + \mu Q)
$$
$$
=\ler{-f'(\lambda)+C_{f, \lambda, \mu}}\tr R P +\ler{-f'(\mu)+C_{f, \lambda, \mu}}\tr R Q
$$
$$
=H_f\ler{\phi(R), \lambda \phi(P) + \mu \phi(Q)}
$$
$$
=\ler{-f'(\lambda)+C_{f, \lambda, \mu}}\tr \phi(R) \phi(P) +\ler{-f'(\mu)+C_{f, \lambda, \mu}}\tr \phi(R) \phi(Q)
$$
We used that $\phi(R) \in \cP_1(\cH)$ such that
$$\supp{\phi(R)}\subseteq \supp{\lambda \phi(P) + \mu \phi(Q)}=\supp{\phi(P)}+\supp{\phi(Q)}.$$
Any element of a nontrivial compact real interval is a unique convex combination of the endpoints, hence we get that
\be \label{trpr}
\tr R P = \tr \phi(R) \phi(P) \text{ and } \tr R Q = \tr \phi(R) \phi(Q).
\ee
\emph{Wigner's theorem} states that any bijection $\xi: \cP_1(\cH) \rightarrow \cP_1(\cH)$ which preserves the \emph{transition probability} --- i. e., for which $\tr \xi(P)\xi(Q)=\tr P Q$ holds for any $P, Q \in \cP_1(\cH)$ --- is implemented by a unitary or antiunitary operator --- see e. g. \cite{ml99}. So, by Wigner's theorem, we get that
\be \label{p1en}
\phi(R)=U R U^* \qquad  \ler{R \in \cP_1(\cH)}
\ee
for some unitary or antiunitary operator $U$ acting on $\cH.$
\par
Now let
$$
\psi(D)=U^*\phi(D)U \qquad \ler{D \in \cS(\cH)}.
$$
Then $\psi$ is the identity on $\cP_1(\cH)$ and it preserves the Bregman $f$-divergence.
Note that by (\ref{tart1}) for any $D \in \cS(\cH)$ and $P \in \cP_1(\cH)$
$$
\supp{P} \subseteq \supp{D} \Leftrightarrow \supp{\psi(P)} \subseteq \supp{\psi(D)}
$$
and
$$
P=\psi(P) \Rightarrow \supp{P}=\supp{\psi(P)},
$$
hence 
$$\supp{D}=\supp{\psi(D)}.$$
By (\ref{compvar}), for any $D \in \cS(\cH)$ and $P, Q \in \cP_1(\cH)$ with $\supp{P} \subseteq \supp{D}$, $\supp{Q} \subseteq \supp{D}$
we have
$$
H_f(P, D)=\tr_{\supp{D}} \ler{f(P)-f(D)-f'(D)(P-D)}
$$
$$
H_f(Q, D)=\tr_{\supp{D}} \ler{f(Q)-f(D)-f'(D)(Q-D)}
$$
Using that $f(0)=f(1)=0$ and hence $f(P)=f(Q)=0,$ we get
\be \label{kul1}
H_f(P, D)-H_f(Q, D)= \tr_{\supp{D}} f'(D)(Q-P).
\ee
Similarly,
\be \label{kul2}
H_f(P, \psi(D))-H_f(Q, \psi(D))= \tr_{\supp{D}} f'\ler{\psi(D)}(Q-P).
\ee
$\psi$ preserves the Bregman $f$-divergence, hence  subtracting (\ref{kul2}) from (\ref{kul1}) one gets that that
\be \label{konz}
\tr_{\supp{D}}\ler{f'(D)-f'\ler{\psi(D)}}(Q-P)=0
\ee
for any $P,Q \in \cP_1(\cH)$ with $\supp{P} \subseteq \supp{D}, \supp{Q} \subseteq \supp{D}.$
\par
From now, unless stated otherwise, we restrict ourselves to $\supp{D}.$ 
It follows from (\ref{konz}) that
\be \label{ci}
f'(D)-f'\ler{\psi(D)}=cI
\ee
for some $c \in \R$ (in particular, $f'(D)$ and $f'\ler{\psi(D)}$ commute). Indeed, if $P$ and $Q$ are projections corresponding to different eigenvectors of $f'(D)-f'\ler{\psi(D)}$ (which is self-adjoint), then (\ref{konz}) shows that the eigenvalues are the same. So all the eigenvalues of $f'(D)-f'\ler{\psi(D)}$ are equal.
\par
Suppose that $c\neq 0,$ for example, $c>0.$ Then by (\ref{ci}),
$$
f'(D) > f'\ler{\psi(D)}.
$$
$f'(D)$ and $f'\ler{\psi(D)}$ commute and $f'$ is monotone, hence we get that
$$
D>\psi(D).
$$
This is a contradiction, so $c=0,$ that is, $f'(D)=f'\ler{\psi(D)}.$ By the strict monotonicity of $f'$ this implies $D=\psi(D).$
\par
So we deduced that $D=\psi(D)$ on $\supp{D}=\supp{\psi(D)}$ which means that without any restriction, we have
$$
D=\psi(D).
$$
We deduced that $\psi$ is the identity of $\cS(\cH),$ the proof is done.
\end{proof}

\begin{proof}[Case II]
Now we investigate the case when $\lim_{x \to 0} f'(x) \in \R.$
In this case the Bregman divergence of any two states is finite, hence it is reasonable to define
\be \label{Mdef}
M(X):=\mathrm{max}\left\{H_f(X,D) \, | D \in \cS(\cH)\right\}.
\ee
Let us note that $f$ is contiuously differentiable, hence the map $D \mapsto H_f(X,D)$ is contiuous on the compact set $\cS(\cH).$ Therefore, the above definition is correct as $\mathrm{max}\left\{H_f(X,D) \, | D \in \cS(\cH)\right\}$ exists.
In the followings we show that for $P \in \cS(\cH)$
$$
M(P)= \mathrm{max}\left\{M(X) | X \in \cS(\cH)\right\}
$$
if and only if $P$ is a pure state, i. e., a rank-one projection.
\par
Indeed, assume that $\rho$ is not a pure state, that is, $\rho=\sum_{j=1}^{k} \lambda_j P_j$ for some $k \geq 2,$ for some real numbers $0 <\lambda_1, \dots, \lambda_k<1$ with $\sum_{j=1}^{k} \lambda_j=1$ and for some rank-one projections $P_1, \dots, P_k.$ Assume that $D^* \in \cS(\cH)$ has the property that
$$
M(\rho)=\mathrm{max}\left\{H_f(\rho,D) \, | D \in \cS(\cH)\right\}=H_f(\rho, D^*).
$$
By the strict convexity of $f,$ the map
$$
X \mapsto H_f(X,Y)= \tr \ler{f(X)-f(Y)-f'(Y)(X-Y)}
$$
is strictly convex on $\cS(\cH),$ see e. g. \cite[2.10. Theorem]{carlen}. So
$$
H_f(\rho, D^*)=H_f\ler{\sum_{j=1}^{k} \lambda_j P_j, D^*}<\sum_{j=1}^{k} \lambda_j H_f(P_j, D^*).
$$
Therefore, $H_f\ler{\rho, D^*}<H_f\ler{P_{j^*},D^*}$ for some $j^* \in \{1, \dots, k\}.$
This means that
$$
M(\rho)=H_f\ler{\rho, D^*} < H_f\ler{P_{j^*},D^*}\leq M\ler{P_{j^*}}.
$$
On the other hand, it can be easily seen --- for example, by the unitary invariance of the Bregman divergences --- that $M(P)=M(Q)$ for all $P, Q \in \cP_1(\cH).$ Therefore, $M(P)$ is maximal, if $P \in \cP_1(\cH).$
\par
So we have the following characterization of the pure states.
\be \label{pschar}
P \in \cP_1(\cH) \Leftrightarrow M(P)= \mathrm{max}\left\{M(X) | X \in \cS(\cH)\right\}.
\ee
If $\phi$ is a bijection that preserves the Bregman $f$-divergence, then for any $X \in \cS(\cH)$
$$
M\ler{\phi(X)}=\mathrm{max}\left\{H_f\ler{\phi(X),\phi(D)} \, | D \in \cS(\cH)\right\}
$$
$$
=\mathrm{max}\left\{H_f\ler{X,D} \, | D \in \cS(\cH)\right\}=M(X).
$$
So $\phi$ restricted to $\cP_1(\cH)$ is a $\cP_1(\cH) \rightarrow \cP_1(\cH)$ bijection.
\par
Now let $P, Q \in \cP_1(\cH).$ By the computation rule (\ref{comp_2})
$$
H_f(P,Q)=\ler{f(0)-f(0)-f'(0)(0-0)}\tr (I-P)(I-Q)
$$
$$
+\ler{f(1)-f(0)-f'(0)(1-0)}\tr (P)(I-Q)
$$
$$
+\ler{f(0)-f(1)-f'(1)(0-1)}\tr (I-P)(Q)+\ler{f(1)-f(1)-f'(1)(1-1)}\tr PQ
$$
\be \label{egysz}
=\ler{1-\tr PQ}\ler{f'(1)-f'(0)}.
\ee
Similarly,
\be \label{egysz2}
H_f \ler{\phi(P),\phi(Q)}=\ler{1-\tr \phi(P)\phi(Q)}\ler{f'(1)-f'(0)}.
\ee
$f'$ is strictly monotone, hence $\ler{f'(1)-f'(0)} \neq 0.$ So by (\ref{egysz}) and (\ref{egysz2}), $H_f \ler{\phi(P),\phi(Q)}=H_f(P,Q)$ gives us
\be \label{trpr2}
\tr PQ = \tr \phi(P) \phi(Q) \qquad  \ler{P, Q \in \cP_1(\cH)}.
\ee
From now on, our argument is very similar to the ending part of the discussion of \emph{Case I}.
By Wigner's theorem,
\be \label{p1en_2}
\phi(R)=U R U^* \qquad  \ler{R \in \cP_1(\cH)}
\ee
for some unitary or antiunitary operator $U.$
\par
Now let
$$
\psi(D)=U^*\phi(D)U \qquad \ler{D \in \cS(\cH)}.
$$
Then $\psi$ is the identity on $\cP_1(\cH)$ and it preserves the Bregman $f$-divergence.
By (\ref{comp_2var}), for any $D \in \cS(\cH)$ and $P, Q \in \cP_1(\cH)$
$$
H_f(P, D)=\tr \ler{f(P)-f(D)-f'(D)(P-D)}
$$
$$
H_f(Q, D)=\tr \ler{f(Q)-f(D)-f'(D)(Q-D)}
$$
It follows that
\be \label{kul1_2}
H_f(P, D)-H_f(Q, D)= \tr f'(D)(Q-P).
\ee
Similarly,
\be \label{kul2_2}
H_f(P, \psi(D))-H_f(Q, \psi(D))= \tr f'\ler{\psi(D)}(Q-P).
\ee
$\psi$ preserves the Bregman $f$-divergence, hence (\ref{kul1_2}) and (\ref{kul2_2}) imply that
\be \label{konz_2}
\tr\ler{f'(D)-f'\ler{\psi(D)}}(Q-P)=0
\ee
for any $P,Q \in \cP_1(\cH).$
\par

It follows from (\ref{konz_2}) that
\be \label{ci_2}
f'(D)-f'\ler{\psi(D)}=cI
\ee
for some $c \in \R,$ and it is easy to show that $c=0$ by necessity.
\par
That is, $f'(D)=f'\ler{\psi(D)}.$ By the strict monotonicity of $f'$ this implies
$$D=\psi(D).$$
\par
We deduced that $\psi$ is the identity of $\cS(\cH),$ the proof is done.
\end{proof}

\subsection{The proof of Theorem \ref{fo_jen}}
Our aim is to prove that any bijective transformation of $\cS(\cH)$ which preserves the Jensen $f$-divergence is implemented by a unitary or an antiunitary operator. Recall that $J_f(.,.)$ denotes the Jensen $f$-divergence, which quantity was defined in Subsection \ref{jendef}.
\begin{lemma}
For any rank-one projections $P, Q \in \cP_1(\cH)$ we have
\be \label{jepro}
J_f(P,Q)=(-1)\ler{f\ler{\frac{1}{2}\ler{1+\sqrt{\tr PQ}}}+f\ler{\frac{1}{2}\ler{1-\sqrt{\tr PQ}}}}.
\ee
\end{lemma}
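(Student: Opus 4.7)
The plan is to reduce $J_f(P,Q)$ to a scalar calculation by exploiting the normalization $f(0)=f(1)=0$ from the preceding remark, and then to determine the spectrum of $(P+Q)/2$ explicitly in terms of $\tr PQ$.

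First, since $P$ and $Q$ are rank-one projections, their only eigenvalues are $0$ and $1$, so the standard operator function gives $f(P)=f(1)P+f(0)(I-P)=0$ and likewise $f(Q)=0$. From the definition of the Jensen divergence this immediately reduces the formula to
\[
J_f(P,Q)=-\tr f\!\ler{\frac{P+Q}{2}}.
\]

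Next I compute the spectrum of $P+Q$. This operator has rank at most $2$ and its range lies in $\spn{|p\rangle,|q\rangle}$, where $P=|p\rangle\langle p|$ and $Q=|q\rangle\langle q|$. On the orthogonal complement of that (at most $2$-dimensional) range, $P+Q$ acts as $0$. The two possibly nonzero eigenvalues $\lambda_1,\lambda_2$ satisfy
\[
\lambda_1+\lambda_2=\tr(P+Q)=2,\qquad \lambda_1^2+\lambda_2^2=\tr(P+Q)^2=2+2\tr PQ,
\]
so $\lambda_1\lambda_2=1-\tr PQ$ and consequently $\{\lambda_1,\lambda_2\}=\{1\pm\sqrt{\tr PQ}\}$. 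Dividing by $2$, the spectrum of $(P+Q)/2$ consists of $(1\pm\sqrt{\tr PQ})/2$ together with the eigenvalue $0$ with the remaining multiplicity. Since $f(0)=0$, those zero eigenvalues contribute nothing to the trace, so
\[
\tr f\!\ler{\frac{P+Q}{2}}=f\!\ler{\frac{1+\sqrt{\tr PQ}}{2}}+f\!\ler{\frac{1-\sqrt{\tr PQ}}{2}},
\]
which combined with the previous display gives exactly the claimed identity \eqref{jepro}.

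There is no real obstacle: the only subtlety is verifying that the formula continues to work in the degenerate cases $P=Q$ (where $\tr PQ=1$ and the eigenvalues degenerate to $1$ and $0$, yielding the correct value $J_f(P,P)=-(f(1)+f(0))=0$) and $P\perp Q$ (where $\tr PQ=0$ and both nonzero eigenvalues coincide at $1/2$, so $J_f(P,Q)=-2f(1/2)$). Both are automatic from the spectral computation above once the normalization $f(0)=f(1)=0$ is in force.
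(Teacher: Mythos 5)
Your proof is correct and follows essentially the same route as the paper: reduce to $J_f(P,Q)=-\tr f\ler{\frac{P+Q}{2}}$ via $f(0)=f(1)=0$ and then identify the two nontrivial eigenvalues of $\frac{1}{2}(P+Q)$ as $\frac{1}{2}\ler{1\pm\sqrt{\tr PQ}}$. The only (cosmetic) difference is that you extract the eigenvalues basis-free from $\tr(P+Q)$ and $\tr(P+Q)^2$, whereas the paper writes $P$ and $Q$ as explicit $2\times 2$ blocks and computes the determinant of the restriction to $\spn{e_1,e_2}$.
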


\begin{proof}
Let $\{e_1, \dots, e_n\} \subseteq \cH$ be an orthonormal basis such that $\supp{P}=\spn{e_1}$ and $\supp{Q}\subseteq \spn{e_1,e_2}.$
In this basis
\[
 P=
 \left[
 \ba{ccccc}
 1 & 0 & 0 & \hdots & 0 \\
 0 & 0 & 0 & \hdots & 0 \\
 0 & 0 & 0 & \hdots & 0 \\
 \vdots & \vdots & \vdots & \ddots & \vdots \\
  0 & 0 & 0 & \hdots & 0
  \ea
 \right]
\]
and
\[
 Q=
 \left[
 \ba{ccccc}
 p & \sqrt{p(1-p)}\overline{\alpha} & 0 & \hdots & 0 \\
 \sqrt{p(1-p)}\alpha & 1-p & 0 & \hdots & 0 \\
 0 & 0 & 0 & \hdots & 0 \\
 \vdots & \vdots & \vdots & \ddots & \vdots \\
  0 & 0 & 0 & \hdots & 0
  \ea
 \right] 
 \]
where we introduced the notation $p:=\tr PQ$ and $\alpha \in \C, \abs{\alpha}=1.$ Let us denote by $\mu$ and $1-\mu$ the eigenvalues of $\frac{1}{2}(P+Q)$ restricted to $\spn{e_1,e_2}.$ Easy computations show that
$$
\det\ler{\frac{1}{2}(P+Q)_{|\spn{e_1,e_2}}}=\frac{1}{4}(1-p).
$$
Therefore, $\mu(1-\mu)=\frac{1}{4}(1-p).$ The solution of this quadratic equation is
$$
\mu_{1,2}=\frac{1}{2}\ler{1 \pm \sqrt{p}},
$$
which gives the result of the lemma. (We used that by $f(0)=f(1)=0$ we have $\tr f(P)=\tr f(Q)=0.$)
\end{proof}

\begin{corollary} \label{pmax}
By the convexity of $f,$
$$
\max_{P,Q \in \cP_1(\cH)} J_f(P,Q)=-2 f \ler{\frac{1}{2}}=:M_f
$$
and by the strict convexity of $f,$ for $P,Q \in \cP_1(\cH)$ we have $J_f(P,Q)=M_f$ if and only if $\tr PQ=0,$ that is, $PQ=0.$
\end{corollary}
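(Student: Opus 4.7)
The plan is to apply the explicit formula (\ref{jepro}) from the preceding lemma and then exploit (strict) convexity of $f$ on the interval $[0,1]$.

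Set $t := \sqrt{\tr PQ} \in [0,1]$. Formula (\ref{jepro}) rewrites $J_f(P,Q)$ as
$$
J_f(P,Q)=-\ler{f\ler{\tfrac{1+t}{2}}+f\ler{\tfrac{1-t}{2}}}.
$$
Since $\tfrac{1}{2}\ler{\tfrac{1+t}{2}+\tfrac{1-t}{2}}=\tfrac{1}{2}$, convexity of $f$ at the midpoint gives
$$
f\ler{\tfrac{1}{2}}\le \tfrac{1}{2}\ler{f\ler{\tfrac{1+t}{2}}+f\ler{\tfrac{1-t}{2}}},
$$
hence $J_f(P,Q)\le -2f\ler{\tfrac{1}{2}}=M_f$ for every $P,Q\in\cP_1(\cH)$. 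This is the desired upper bound, and it is attained: if one picks any two orthogonal rank-one projections $P,Q$ (which exist because $\dim \cH \geq 2$), then $\tr PQ=0$, so $t=0$ and the two arguments of $f$ coincide at $\tfrac12$, forcing $J_f(P,Q)=-2f\ler{\tfrac{1}{2}}=M_f$. Therefore
$$
\max_{P,Q\in\cP_1(\cH)} J_f(P,Q)=-2f\ler{\tfrac{1}{2}}=M_f.
$$

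For the equality characterisation, the strict convexity of $f$ turns the above midpoint inequality into an equality if and only if $\tfrac{1+t}{2}=\tfrac{1-t}{2}$, that is, $t=0$. Thus $J_f(P,Q)=M_f$ if and only if $\tr PQ=0$. Since $P$ and $Q$ are rank-one projections, $\tr PQ=0$ is equivalent to $PQ=0$ (the two ranges are orthogonal one-dimensional subspaces). No serious obstacle is anticipated; the only point that requires a little care is the quick verification that the bound is actually attained, for which orthogonal rank-one projections suffice.
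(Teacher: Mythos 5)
Your proof is correct and follows exactly the route the paper intends: the corollary is stated as an immediate consequence of formula (\ref{jepro}) combined with (strict) midpoint convexity of $f$ at $\tfrac12$, which is precisely what you carry out, including the attainment at orthogonal projections and the equivalence $\tr PQ=0 \Leftrightarrow PQ=0$.
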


\begin{claim} \label{smax}
$$
\max_{A,B \in \cS(\cH)} J_f(A,B)=M_f
$$
and for any $A,B \in \cS(\cH),$ if $J_f(A,B)=M_f,$ then $AB=0.$
\end{claim}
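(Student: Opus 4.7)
The plan is to reduce the bound $J_f(A, B) \leq M_f$ to a scalar inequality via a trace superadditivity argument. The core operator-trace ingredient is that for any convex $f$ with $f(0) = 0$ and $X, Y \in \cB^{+}(\cH)$,
\[
\tr f(X + Y) \geq \tr f(X) + \tr f(Y),
\]
with equality (under strict convexity of $f$) iff $XY = 0$. I would derive this from the integral representation $f(x) = f'_+(0)\, x + \int_0^\infty (x - t)_+ \, d\mu(t)$ combined with the pointwise trace inequality $\tr(X + Y - tI)_+ \geq \tr(X - tI)_+ + \tr(Y - tI)_+$, which itself follows by plugging the join $Q_X \vee Q_Y$ of the spectral projections of $X - tI$ and $Y - tI$ onto their positive eigenspaces into the variational formula $\tr(Z)_+ = \max_P \tr(P Z)$.

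Applied with $X = A/2$, $Y = B/2$, this yields $\tr f(M) \geq \tr f(A/2) + \tr f(B/2)$; rearranging gives
\[
J_f(A, B) \leq \sum_i h(\lambda_i(A)) + \sum_j h(\lambda_j(B)), \qquad h(x) := \tfrac{1}{2} f(x) - f(x/2).
\]
One checks that $h(0) = 0$ and $h(1) = -f(1/2)$. Moreover $h$ is convex on $[0,1]$: the Matrix Entropy Class condition specialised to the scalar case says that $1/f''$ is a non-negative concave function on $[0, \infty)$, so concavity at the convex combination $x/2 = \tfrac12 \cdot x + \tfrac12 \cdot 0$ yields $1/f''(x/2) \geq \tfrac{1}{2}(1/f''(x)) + \tfrac{1}{2}(1/f''(0)) \geq \tfrac{1}{2} \cdot 1/f''(x)$, hence $f''(x/2) \leq 2 f''(x)$ and $h''(x) = \tfrac{1}{2} f''(x) - \tfrac{1}{4} f''(x/2) \geq 0$. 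Convexity of $h$ together with $h(0) = 0$ makes $h(x)/x$ non-decreasing on $(0, 1]$, so $h(x) \leq -f(1/2)\, x$ on $[0,1]$. Summing over the eigenvalues of $A$ and of $B$, each set summing to $1$:
\[
J_f(A, B) \leq -2 f(1/2) = M_f.
\]

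For the equality case: $J_f(A, B) = M_f$ forces equality in the trace superadditivity, which by strict convexity of $f$ gives $(A/2)(B/2) = 0$, i.e., $AB = 0$, as claimed. The main obstacle is the non-commuting case of the trace superadditivity together with the identification of its strict equality condition; this ultimately rests on the pointwise positive-part trace inequality and careful spectral bookkeeping, the scalar step afterwards being relatively routine once the convexity of $h$ has been extracted from the Matrix Entropy Class condition.
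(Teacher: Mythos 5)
Your route is genuinely different from the paper's, and it is viable. The paper feeds the Matrix Entropy Class hypothesis through \cite[Thm.~2]{pv15} to obtain joint convexity of $H_f$, writes $J_f(A,B)=\frac12\ler{H_f\ler{A,\frac{A+B}{2}}+H_f\ler{B,\frac{A+B}{2}}}$, and uses joint convexity twice to get $J_f(A,B)\le\sum_{i,j}\lambda_i\mu_j J_f(P_i,Q_j)$; the explicit rank-one formula (\ref{jepro}) then yields both the bound $M_f$ and, since all weights $\lambda_i\mu_j$ are positive, the equality condition $P_iQ_j=0$ and hence $AB=0$. You instead invoke trace superadditivity $\tr f(X+Y)\ge\tr f(X)+\tr f(Y)$ and a scalar chord estimate for $h(x)=\frac12 f(x)-f(x/2)$, using only the scalar shadow of the Matrix Entropy Class condition ($1/f''$ concave and nonnegative on $(0,\infty)$, giving $f''(x/2)\le 2f''(x)$ and so $h''\ge 0$). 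The scalar half is correct and clean: $h(0)=0$, $h(1)=-f(1/2)$, convexity gives $h(x)\le -f(1/2)\,x$ on $[0,1]$, and summing over the unit-trace spectra gives $M_f$. Your argument avoids the joint-convexity theorem entirely and would apply to any strictly convex $f$ for which $h$ is convex; the price is that all the operator-theoretic weight now rests on the superadditivity lemma and its equality case.

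That lemma is where your write-up has two concrete soft spots. First, the representation $f(x)=f'_+(0)x+\int_0^\infty(x-t)_+\,d\mu(t)$ requires $f'_+(0)$ to be finite, and it is not for $f(x)=x\log x$ --- the flagship member of the Matrix Entropy Class --- where $f'_+(0)=-\infty$ and $\int_0^x\frac{x-t}{t}\,dt$ diverges. Second, the equality case (which you correctly identify as the main obstacle) does not follow from your pointwise inequality as easily as the text suggests: equality in $\tr\ler{X+Y-tI}_+\ge\tr\ler{X-tI}_+ +\tr\ler{Y-tI}_+$ for small $t$ only yields $\rnk{X+Y}=\rnk{X}+\rnk{Y}$, i.e.\ $\supp{X}\cap\supp{Y}=\{0\}$, which is strictly weaker than $XY=0$. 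Both issues are repaired by recasting your family of positive-part inequalities (together with $\tr(X+Y)=\tr X+\tr Y$) as the majorization of the $2n$-tuple $\lambda(X)\cup\lambda(Y)$ by $\lambda(X+Y)$ padded with $n$ zeros: Hardy--Littlewood--P\'olya then gives $\tr f(X+Y)\ge\tr f(X)+\tr f(Y)$ for every convex $f$ with $f(0)=0$ without any integrability hypothesis, and its equality case for strictly convex $f$ forces the nonzero spectra of $X+Y$ and of $X,Y$ to coincide as multisets, whence $\tr(X+Y)^2=\tr X^2+\tr Y^2$, so $\tr XY=0$ and $XY=0$. With that lemma properly established, your proof is complete; as written, the lemma you lean on is at least as substantial as the joint-convexity input it replaces and needs to be proved or cited.
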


\begin{proof}
The function $f$ is an element of the Matrix Entropy Class, hence by \cite[Thm. 2]{pv15} the induced Bregman $f$-divergence (denoted by $H_f$) is jointly convex.
Observe that
$$
J_f(A,B)=\frac{1}{2}\ler{H_f\ler{A, \frac{A+B}{2}}+H_f\ler{B, \frac{A+B}{2}}}.
$$
Suppose that
$$
A=\sum_i \lambda_i P_i, \text{ and } B= \sum_j \mu_j Q_j
$$
where the $P_i$'s and the $Q_j$'s are rank-one projections and the $\lambda_i$'s and the $\mu_j$'s are positive numbers such that $\sum_i \lambda_i=\sum_j \mu_j=1.$
The joint convexity of the Bregman divergence implies that
$$
J_f(A,B)=\frac{1}{2}\ler{H_f\ler{A, \frac{A+B}{2}}+H_f\ler{B, \frac{A+B}{2}}}
$$
$$
=\frac{1}{2}\ler{H_f\ler{\sum_i \lambda_i P_i, \sum_i \lambda_i \frac{P_i+B}{2}}+H_f\ler{\sum_i \lambda_i B, \sum_i \lambda_i \frac{P_i+B}{2}}}
$$
$$
\leq
\sum_i \lambda_i \ler{ \frac{1}{2} H_f\ler{P_i, \frac{P_i+B}{2}}+ \frac{1}{2} H_f\ler{ B, \frac{P_i+B}{2}}}=\sum_i \lambda_i J_f(P_i, B).
$$
Similarly, $J_f(P_i,B) \leq \sum_j \mu_j J_f\ler{P_i, Q_j}$ for any $i.$
Therefore,
$$
J_f(A,B) \leq \sum_i \sum_j \lambda_i \mu_j J_f(P_i, Q_j) \leq \max_{P,Q \in \cP_1(\cH)} J_f(P,Q)=M_f.
$$
We have $\lambda_i \mu_j>0$ for any $i$ and $j,$ and $\sum_i \sum_j \lambda_i \mu_j=1,$ hence if $J_f(A,B)=M_f,$ then $J_f(P_i, Q_j)=M_f$ for any $i$ and $j.$ By Corollary \ref{pmax} this means that we always have $P_iQ_j=0,$ and therefore $AB=0.$
\end{proof}
\begin{claim}
If $\phi$ is a bijection on $\cS(\cH)$ that preserves the Jensen $f$-divergence, then $\phi$ restricted to $\cP_1(\cH)$ is a bijection from $\cP_1(\cH)$ to $\cP_1(\cH).$
\end{claim}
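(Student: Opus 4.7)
The plan is to give a characterization of the rank-one projections purely in terms of values of the Jensen divergence $J_f$, so that any $J_f$-preserving bijection $\phi$ automatically maps $\cP_1(\cH)$ onto itself. Let $n := \dim \cH$; the case $n = 1$ is trivial, so I assume $n \geq 2$. The two tools from the excerpt that drive everything are Corollary \ref{pmax} (for $P,Q \in \cP_1(\cH)$, $J_f(P,Q) = M_f$ iff $PQ = 0$) and Claim \ref{smax} (for general $A, B \in \cS(\cH)$, $J_f(A,B) \le M_f$, with equality forcing $AB = 0$).

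The characterization I would establish is: a state $A \in \cS(\cH)$ lies in $\cP_1(\cH)$ if and only if there exist $B_1, \ldots, B_{n-1} \in \cS(\cH)$ such that $J_f(A, B_i) = M_f$ for all $i$ and $J_f(B_i, B_j) = M_f$ for all $i \neq j$. For the ``only if'' direction, given $A \in \cP_1(\cH)$, I would extend $A$ to a resolution $A + \sum_{i=1}^{n-1} P_i = I$ by mutually orthogonal rank-one projections and set $B_i := P_i$; then Corollary \ref{pmax} immediately gives all the required equalities $J_f = M_f$. For the ``if'' direction, Claim \ref{smax} forces $A B_i = 0$ and $B_i B_j = 0$ for all $i \neq j$, so the supports of $A, B_1, \ldots, B_{n-1}$ are pairwise orthogonal nonzero subspaces of $\cH$. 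Since $n$ pairwise orthogonal nonzero subspaces of an $n$-dimensional Hilbert space must each be one-dimensional, $A$ and every $B_i$ must be a rank-one projection.

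With this equivalence at hand, the claim is immediate. If $A \in \cP_1(\cH)$ and $B_1, \ldots, B_{n-1}$ witness the characterization, then $\phi(A), \phi(B_1), \ldots, \phi(B_{n-1})$ witness the same characterization for $\phi(A)$, because $\phi$ preserves every relevant value of $J_f$; hence $\phi(A) \in \cP_1(\cH)$. The map $\phi^{-1}$ is also a $J_f$-preserving bijection on $\cS(\cH)$, so the same argument shows $\phi^{-1}(\cP_1(\cH)) \subseteq \cP_1(\cH)$. Therefore $\phi$ restricts to a bijection $\cP_1(\cH) \to \cP_1(\cH)$.

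The only delicate point is the ``if'' direction of the characterization: one must use that each $B_i$ is a state (hence nonzero, with at least one-dimensional support) together with the dimension count, and one must invoke Claim \ref{smax} to translate the scalar condition $J_f(\cdot,\cdot) = M_f$ into genuine support-orthogonality of operators. Beyond this, the proof is essentially a routine application of the preservation hypothesis on $\phi$.
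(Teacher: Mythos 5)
Your proposal is correct and follows essentially the same route as the paper: the paper also embeds a rank-one projection $P$ into a family of $\dim(\cH)$ mutually orthogonal rank-one projections with pairwise Jensen divergence $M_f$, applies Claim \ref{smax} to the image family to get pairwise orthogonal supports, and concludes by the dimension count that every element of the image is a rank-one projection. Your explicit treatment of surjectivity via $\phi^{-1}$ is a minor point the paper leaves implicit, but the substance is identical.
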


\begin{proof}
If $P \in \cP_1(\cH),$ then there exists a set $H \subseteq \cS(\cH)$ such that $P \in H, \, \abs{H}=\dim(\cH)$ and $J_f(A,B)=M_f$ for any $A,B \in H.$ ($H$ contains orthogonal rank-one projections.) $\phi$ preserves the Jensen $f$-divergence, hence $J_f(X,Y)=M_f$ for any $X, Y \in \phi(H),$ that is, by Lemma \ref{smax}, $\phi(H) \subseteq \cS(\cH)$ has $\dim(\cH)$ pairwise orthogonal elements. This implies that all the elements of $\phi(H)$ are rank-one projections, in particular, $\phi(P) \in \cP_1(\cH).$
\end{proof}

\begin{claim}
There exists a unitary or antiunitary operator $U: \cH \rightarrow \cH$ such that
$$
\phi(R)=URU^* \qquad  \ler{R \in \cP_1(\cH)}.
$$
\end{claim}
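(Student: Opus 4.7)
The plan is to combine the formula \eqref{jepro} with Wigner's theorem. Since the previous claim has already established that $\phi$ restricts to a bijection $\cP_1(\cH)\to\cP_1(\cH)$, it suffices to verify that $\phi$ preserves the transition probability, i.e.\ $\tr \phi(P)\phi(Q)=\tr PQ$ for all $P,Q\in\cP_1(\cH)$; then Wigner's theorem (as cited in the Case I argument) yields the required unitary or antiunitary $U$.

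By \eqref{jepro} and the hypothesis $J_f(\phi(P),\phi(Q))=J_f(P,Q)$, it is enough to show that the function
$$
g:[0,1]\to\R,\qquad g(t):=f\!\ler{\tfrac{1}{2}(1+\sqrt{t})}+f\!\ler{\tfrac{1}{2}(1-\sqrt{t})}
$$
is injective; equivalently, setting $s=\sqrt{t}$, that
$$
h:[0,1]\to\R,\qquad h(s):=f\!\ler{\tfrac{1}{2}(1+s)}+f\!\ler{\tfrac{1}{2}(1-s)}
$$
is injective. I plan to obtain injectivity from the strict convexity of $f$, which is assumed in Theorem \ref{fo_jen}. Indeed, both $s\mapsto f\!\ler{\tfrac{1}{2}(1+s)}$ and $s\mapsto f\!\ler{\tfrac{1}{2}(1-s)}$ are compositions of the strictly convex $f$ with a nonconstant affine map, hence strictly convex; their sum $h$ is therefore strictly convex on $[0,1]$. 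Consequently $h'$ is strictly increasing on $(0,1)$. Since
$$
h'(s)=\tfrac{1}{2}\,f'\!\ler{\tfrac{1}{2}(1+s)}-\tfrac{1}{2}\,f'\!\ler{\tfrac{1}{2}(1-s)}
$$
vanishes at $s=0$ and $f'$ is strictly monotone increasing (strict convexity of $f$), we obtain $h'(s)>0$ for $s\in(0,1)$, so $h$ is strictly increasing on $[0,1]$. Thus $g$ is strictly increasing on $[0,1]$, in particular injective.

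Applying this injectivity to the identity
$$
g(\tr PQ)=-J_f(P,Q)=-J_f(\phi(P),\phi(Q))=g(\tr\phi(P)\phi(Q))
$$
gives $\tr \phi(P)\phi(Q)=\tr PQ$ for all $P,Q\in\cP_1(\cH)$. Wigner's theorem then provides a unitary or antiunitary $U:\cH\to\cH$ with $\phi(R)=URU^*$ for every $R\in\cP_1(\cH)$, completing the claim. The only non-routine step is the injectivity of $g$, but as sketched above this reduces immediately to the strict convexity hypothesis on $f$; no property beyond strict convexity (in particular no feature of the Matrix Entropy Class) is used at this stage.
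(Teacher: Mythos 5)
Your proposal is correct and follows essentially the same route as the paper: the paper likewise observes that by strict convexity of $f$ the expression in (\ref{jepro}) is a strictly monotone (decreasing) function of $\tr PQ$, deduces that $\phi$ preserves transition probabilities on $\cP_1(\cH)$, and invokes Wigner's theorem. You merely spell out the monotonicity argument (via $h'(0)=0$ and strict monotonicity of $f'$) that the paper states in one line.
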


\begin{proof}
By the strict convexity of $f,$ (\ref{jepro}) shows that for $P, Q \in \cP_1(\cH),$ $J_f(P,Q)$ is a strictly monotone decreasing function of $\tr PQ.$ Therefore, if $\phi$ preserves the Jensen $f$-divergence, then it preserves the transition probability $\tr PQ,$ as well. So $\phi$ restricted to $\cP_1(\cH)$ is a bijection from $\cP_1(\cH)$ to $\cP_1(\cH)$ which preserves the transition probability ($\tr \phi(P)\phi(Q)=\tr PQ$ for any $P, Q \in \cP_1(\cH)$). Therefore, by Wigner's theorem we obtain the statement of this Claim.
\end{proof}

\begin{claim}
Let us define the map $\psi: \cS(\cH) \rightarrow \cS(\cH)$ by
$$
\psi(D):=U^*\phi(D)U.
$$
Then $\psi(D)=D$ for any $D \in \cS(\cH).$
\end{claim}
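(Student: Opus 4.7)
My plan is to exploit the identity
$$
J_f(P,D)=J_f(\psi(P),\psi(D))=J_f(P,\psi(D))\qquad \ler{P\in\cP_1(\cH),\,D\in\cS(\cH)},
$$
which follows from the preserver property and the fact established by the previous Claim that $\psi|_{\cP_1(\cH)}=\mathrm{id}$. Since $P$ has spectrum $\{0,1\}$ and $f(0)=f(1)=0$, we have $\tr f(P)=0$, so the identity reduces to the statement that the map
$$
P\mapsto \tr f\ler{\frac{P+D}{2}}-\tr f\ler{\frac{P+\psi(D)}{2}}
$$
is constant on $\cP_1(\cH)$.

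The first main step is to extract an infinitesimal matrix-element identity by differentiation. Taking smooth paths $P(t)=|v_t\rangle\langle v_t|$ with $v_t=\cos(t)v+\sin(t)h$ and, separately, with $v_t=\cos(t)v+i\sin(t)h$ for a unit $v\in\cH$ and $h\perp v$, differentiating the above constancy at $t=0$ and combining real and imaginary parts yields
$$
\langle v|\,f'\ler{\tfrac{|v\rangle\langle v|+D}{2}}\,|h\rangle = \langle v|\,f'\ler{\tfrac{|v\rangle\langle v|+\psi(D)}{2}}\,|h\rangle
$$
for every unit $v$ and every $h\perp v$. Taking $v$ to be an eigenvector of $D$ and extending to an eigenbasis $\{v=e_1,e_2,\dots,e_n\}$, the operator $(|v\rangle\langle v|+D)/2$ is diagonal in this basis, so the left-hand side vanishes for every $h=e_j$ with $j\geq 2$. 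Hence $v$ must be an eigenvector of $f'((|v\rangle\langle v|+\psi(D))/2)$, and thereby of $\psi(D)$ itself, by strict monotonicity (injectivity) of $f'$ and the commutation of $f'(X)$ with $X$. Running $v$ through a full eigenbasis of $D$, and rotating within any degenerate eigenspaces, shows that $D$ and $\psi(D)$ admit a common decomposition $D=\sum_k d_k P_k$, $\psi(D)=\sum_k d'_k P_k$ with spectral projections $P_k$ of $D$ of ranks $m_k$.

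With this in hand, the second step is to match the eigenvalues. Testing the original preserver identity with $P=E_i$ a rank-one projection supported in the range of $P_k$, a direct expansion reduces it to the scalar equation
$$
H(d_k)-H(d'_k)=C\qquad\ler{k=1,2,\dots},
$$
where $H(x):=f(x/2)-f((1+x)/2)$ and $C$ is independent of $k$. Strict convexity of $f$ makes $f'$ strictly increasing, so $H'(x)=\tfrac{1}{2}\ler{f'(x/2)-f'((1+x)/2)}<0$ on $[0,1]$ and $H$ is strictly decreasing. If $C>0$, then $d'_k>d_k$ for every $k$, forcing $\sum_k m_k d'_k>\sum_k m_k d_k=1$ and contradicting the trace condition; $C<0$ is excluded symmetrically. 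Therefore $C=0$, injectivity of $H$ gives $d_k=d'_k$ for every $k$, and consequently $\psi(D)=D$.

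The main obstacle I foresee is the differentiation step, because $f'$ need not extend continuously to $0$ (as already for $f(x)=x\log x$) and $(|v\rangle\langle v|+D)/2$ can have $0$ in its spectrum when $D$ is rank-deficient. This is not fatal: the ``dangerous'' diagonal entries of the tangent direction $|v\rangle\langle h|+|h\rangle\langle v|$ vanish in the eigenbasis of $(|v\rangle\langle v|+D)/2$, so the relevant one-sided directional derivatives still make sense; alternatively one can first establish $\psi(D)=D$ for full-rank $D$ and then treat the rank-deficient case separately.
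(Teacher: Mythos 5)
Your overall architecture (first force a common eigenbasis for $D$ and $\psi(D)$, then match eigenvalues using a strictly monotone scalar function plus the trace normalization) is sound, and your second step is correct and in fact a slightly cleaner version of what the paper does: your $H(x)=f(x/2)-f((1+x)/2)$ is the negative of the paper's $g$, and using $\tr D=\tr\psi(D)=1$ to kill the constant $C$ is tidier than the paper's telescoping/interval-containment argument. The genuine gap is in the first step, the differentiation. The identity $\langle v|f'\bigl(\tfrac{|v\rangle\langle v|+D}{2}\bigr)|h\rangle=\langle v|f'\bigl(\tfrac{|v\rangle\langle v|+\psi(D)}{2}\bigr)|h\rangle$ requires both arguments of $f'$ to lie in the domain where $\tr f$ is G\^ateaux differentiable, and the Matrix Entropy Class only guarantees $f\in C([0,\infty))\cap C^2((0,\infty))$, with $f'$ possibly unbounded near $0$ (already for $x\log x$). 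When $D$ or $\psi(D)$ is singular, $\tfrac{1}{2}(P(t)+D)$ has zero eigenvalues at $t=0$; your observation that the tangent direction is off-diagonal with respect to the kernel only shows these eigenvalues move like $ct^2$, so the increment of $\tr f$ contains terms $f(ct^2)-f(0)$, which is $o(t)$ only if $f(s)-f(0)=o(\sqrt{s})$ --- a condition not implied by the hypotheses (a convex $f$ with $f(s)-f(0)\asymp-\sqrt{s}$ near $0$ is continuous, $C^2$ on $(0,\infty)$, and has operator concave derivative, so nothing stated rules it out). Moreover, since only the \emph{difference} of the two traces is known to be constant, you cannot differentiate one term and infer differentiability of the other; you need both. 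Your fallback of first treating full-rank $D$ does not close this either, because nothing established so far shows that $\psi$ maps full-rank states to full-rank states (the support-preservation argument of the Bregman Case I has no analogue here), so $\tfrac{1}{2}(P+\psi(D))$ may still be singular.

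For contrast, the paper identifies the eigenprojections of $D$ and $\psi(D)$ without ever differentiating or evaluating $f'$: it characterizes the top eigenprojection $P_1$ as the set of maximizers of $R\mapsto\tr f\bigl(\tfrac{1}{2}(R+D)\bigr)$ over $\cP_1(\cH)$, using Weyl's inequality and the strict monotonicity of the divided difference $h(a,b)=\tfrac{f(a)-f(b)}{a-b}$, and then iterates on orthogonal complements; this works for any continuous strictly convex $f$ on $[0,\infty)$ and handles the kernel of $D$ on the same footing as the other eigenspaces. To salvage your route you would either have to restrict to generating functions with $f'$ bounded near $0$ (or prove $f(s)-f(0)=o(\sqrt{s})$ for the whole Matrix Entropy Class), or replace the derivative computation near the boundary by a variational argument of the paper's type --- at which point you have essentially reproduced the paper's proof.
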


\begin{proof}
Clearly, $\psi$ preserves the Jensen $f$-divergence and $\psi(R)=R$ for any $R \in \cP_1(\cH).$ Therefore, we have
$$
J_f(R, D)=J_f\ler{R, \psi(D)} \qquad R \in \cP_1(\cH), \, D \in \cS(\cH), 
$$
which equation can be written as
\be \label{jocon}
\tr f \ler{\frac{1}{2}\ler{R+\psi(D)}}-\tr f \ler{\frac{1}{2}\ler{R+D}}
=\frac{1}{2} \tr f \ler{\psi(D)}-\frac{1}{2} \tr f (D).
\ee
Let $D \in \cS(\cH)$ be arbitrary but fixed. Suppose that the spectral decompositions of $D$ and $\psi(D)$ are
\be \label{spec}
D=\sum_{i=1}^{m} \lambda_i P_i \text{ and } \psi(D)=\sum_{j=1}^{n} \mu_j Q_j,
\ee
where $\lambda_1>\lambda_2>\dots>\lambda_m$ and $\mu_1> \mu_2> \dots > \mu_n.$
\par
$f$ is strictly convex, hence the difference quotient function $h(a,b):=\frac{f(a)-f(b)}{a-b}$ is strictly monotone increasing in both $a$ and $b.$

Now, we show that for any $R \in \cP_1(\cH)$ the quantity $\tr f \ler{\frac{1}{2}\ler{R+D}}$ is maximal, that is,
$$
\tr f \ler{\frac{1}{2}\ler{R+D}}=\max_{X \in \cP_1(\cH)} \tr f \ler{\frac{1}{2}\ler{X+D}}
$$
if and only if $R \leq P_1,$ and in this case
$$
\tr f \ler{\frac{1}{2}\ler{R+D}}=
f\ler{\frac{\lambda_1}{2}+\frac{1}{2}}-f\ler{\frac{\lambda_1}{2}}
+\sum_{i=1}^{m} \rnk{P_i} f\ler{\frac{\lambda_i}{2}}.
$$
Indeed, let us denote the dimension of the Hilbert space $\cH$ by $N$ and let $\nu_1 \geq \nu_2 \geq \dots \geq \nu_N$ denote the (not necessarily different) eigenvalues of $D.$ As $R\geq 0,$ by Weyl's inequality (see e. g. \cite[Thm. III.2.1]{bhat}) we get that the eigenvalues of $R+D$ can be written in the form $\nu_1+\eps_1,\nu_2+\eps_2, \dots, \nu_N+\eps_N,$ where $\eps_k\geq 0$ for any $k \in \{1,\dots,N\}.$ The condition $\tr R=1$ ensures that $\sum_{k=1}^N \eps_k=1.$ Obviously,
$$
\tr f \ler{\frac{1}{2}\ler{R+D}}-\tr f \ler{\frac{1}{2}D}
=\sum_{k=1}^N f\ler{\frac{\nu_k+\eps_k}{2}}-f\ler{\frac{\nu_k}{2}}
$$
$$
=\sum_{k: \, \eps_k> 0} \frac{\eps_k}{2}
\frac{f\ler{\frac{\nu_k+\eps_k}{2}}-f\ler{\frac{\nu_k}{2}}}{\frac{\eps_k}{2}}.
$$
If $R \nleq P_1,$ then $\eps_l, \eps_k>0$ for some $1 \leq l \neq k \leq N.$ In this case, by the strict monotonicity of the difference quotient function $h$ we have
$$
\sum_{k: \, \eps_k>0} \frac{\eps_k}{2}
\frac{f\ler{\frac{\nu_k+\eps_k}{2}}-f\ler{\frac{\nu_k}{2}}}{\frac{\eps_k}{2}}
<
\sum_{k: \, \eps_k>0} \frac{\eps_k}{2}
\frac{f\ler{\frac{\ler{\nu_1+\sum_{l=1}^{k-1}\eps_l}+\eps_k}{2}}-f\ler{\frac{\nu_1+\sum_{l=1}^{k-1}\eps_l}{2}}}{\frac{\eps_k}{2}}
$$
$$
=\sum_{k=1}^N
f\ler{\frac{\ler{\nu_1+\sum_{l=1}^{k-1}\eps_l}+\eps_k}{2}}-f\ler{\frac{\nu_1+\sum_{l=1}^{k-1}\eps_l}{2}}=f\ler{\frac{\nu_1+1}{2}}-f\ler{\frac{\nu_1}{2}}.
$$
If $R \leq P_1,$ then the eigenvalues of $R+D$ are $\nu_1+1,\nu_2, \dots, \nu_N.$ Therefore, in this case we have
$$
\tr f \ler{\frac{1}{2}\ler{R+D}}-\tr f \ler{\frac{1}{2}D}=
f\ler{\frac{\nu_1+1}{2}}-f\ler{\frac{\nu_1}{2}}.
$$
So we get that $\tr f \ler{\frac{1}{2}\ler{R+D}}$ is maximal if and only if $R \leq P_1.$
\par

Similarly,
$\tr f \ler{\frac{1}{2}\ler{R+\psi(D)}}$ is maximal, that is,
$$
\tr f \ler{\frac{1}{2}\ler{R+\psi(D)}}=\max_{X \in \cP_1(\cH)} \tr f \ler{\frac{1}{2}\ler{X+\psi(D)}}
$$
if and only if $R \leq Q_1,$ and in this case
$$
\tr f \ler{\frac{1}{2}\ler{R+\psi(D)}}=
f\ler{\frac{\mu_1}{2}+\frac{1}{2}}-f\ler{\frac{\mu_1}{2}}
+\sum_{j=1}^{n} \rnk{Q_j} f\ler{\frac{\mu_j}{2}}.
$$
Observe that the right hand side of (\ref{jocon}) is independent of $R,$ hence $\tr f \ler{\frac{1}{2}\ler{R+D}}$ is maximal if and only if $\tr f \ler{\frac{1}{2}\ler{R+\psi(D)}}$ is maximal. That is, $R \leq P_1 \Leftrightarrow R \leq Q_1,$ so $P_1=Q_1.$
\par
Let us introduce the notation $S_1:=P_1=Q_1.$
For any $R \in \cP_1(\cH)$ with $R S_1=0$ $\tr f \ler{\frac{1}{2}\ler{R+D}}$ is maximal, that is,
$$
\tr f \ler{\frac{1}{2}\ler{R+D}}=\max_{X \in \cP_1(\cH): X S_1=0} \tr f \ler{\frac{1}{2}\ler{X+D}}
$$
if and only if $R \leq P_2,$ and in this case
$$
\tr f \ler{\frac{1}{2}\ler{R+D}}=
f\ler{\frac{\lambda_2}{2}+\frac{1}{2}}-f\ler{\frac{\lambda_2}{2}}
+\sum_{i=1}^{m} \rnk{P_i} f\ler{\frac{\lambda_i}{2}}.
$$
Similarly,
$\tr f \ler{\frac{1}{2}\ler{R+\psi(D)}}$ is maximal, that is,
$$
\tr f \ler{\frac{1}{2}\ler{R+\psi(D)}}=\max_{X \in \cP_1(\cH): X S_1=0} \tr f \ler{\frac{1}{2}\ler{X+\psi(D)}}
$$
if and only if $R \leq Q_2,$ and in this case
$$
\tr f \ler{\frac{1}{2}\ler{R+\psi(D)}}=
f\ler{\frac{\mu_2}{2}+\frac{1}{2}}-f\ler{\frac{\mu_2}{2}}
+\sum_{j=1}^{n} \rnk{Q_j} f\ler{\frac{\mu_j}{2}}.
$$
The right hand side of (\ref{jocon}) is independent of $R,$ hence $\tr f \ler{\frac{1}{2}\ler{R+D}}$ is maximal if and only if $\tr f \ler{\frac{1}{2}\ler{R+\psi(D)}}$ is maximal. That is, $R \leq P_2 \Leftrightarrow R \leq Q_2,$ so $P_2=Q_2.$
\par
And so on, we can deduce that all the eigenprojections coincide, that is, $m=n$ and $P_i=Q_i$ for all $1\leq i\leq m.$
\par
Let us define the following function on the positive half line.
$$
g(a):=f\ler{\frac{a}{2}+\frac{1}{2}}-f\ler{\frac{a}{2}}
$$
By the strict convexity of $f,$ $g$ is strictly monotone increasing. It follows easily from the above variational formulas that
$$
\max_{R \in \cP_1(\cH): R \leq P_{l}+P_{l+1}} \tr f \ler{\frac{1}{2}\ler{R+D}}-
\min_{R \in \cP_1(\cH): R \leq P_{l}+P_{l+1}} \tr f \ler{\frac{1}{2}\ler{R+D}}
$$
$$
=g\ler{\lambda_l}-g\ler{\lambda_{l+1}}
$$
and
$$
\max_{R \in \cP_1(\cH): R \leq P_{l}+P_{l+1}} \tr f \ler{\frac{1}{2}\ler{R+\psi(D)}}-
$$
$$
-\min_{R \in \cP_1(\cH): R \leq P_{l}+P_{l+1}} \tr f \ler{\frac{1}{2}\ler{R+\psi(D)}}
$$
$$
=g\ler{\mu_l}-g\ler{\mu_{l+1}}
$$
for all $1\leq l  \leq m-1.$
The right hand side of (\ref{jocon}) is independent of $R,$ hence we have
$$
g\ler{\lambda_l}-g\ler{\lambda_{l+1}}=g\ler{\mu_l}-g\ler{\mu_{l+1}}
$$
for all $1\leq l  \leq m-1.$
\par
Indirectly assume that $\lambda_k>\mu_k$ for some $1 \leq k\leq m.$ Then by $\sum_{i=1}^m \lambda_i=\sum_{i=1}^m \mu_i=1$ we have $\lambda_s<\mu_s$ for some $1 \leq s\leq m.$ Without loss of generality we may assume that $s>k.$ Then there exists some $t \in \{k, k+1, \dots, s-1\}$ such that
$$
\left[\lambda_{t+1},\lambda_t\right] \supsetneq \left[\mu_{t+1},\mu_t\right]
$$
which implies by the strict monotonicity of $g$ that
$$
g\ler{\lambda_t}-g\ler{\lambda_{t+1}}>g\ler{\mu_t}-g\ler{\mu_{t+1}}.
$$
A contradiction.
\par
So the proof of the claim is done, and hence the proof Theorem \ref{fo_jen} is complete. 
\end{proof}

\subsection*{Acknowledgement}
The author is grateful to Lajos Moln\'ar for illuminating discussions.

\bibliographystyle{amsplain}

\end{document}